\documentclass{ifacconf}
\usepackage{natbib}
\usepackage{amsmath} 
\usepackage{amssymb}
\usepackage{float}
\usepackage{tikz} 
\usepackage{xcolor}
\usepackage{soul}

\newcommand{\R}{\mathbb{R}}

\begin{document}
\begin{frontmatter}
{\footnotesize
© 2026 the authors. This work has been accepted to IFAC for publication
under a Creative Commons Licence CC-BY-NC-ND.
}
\title{On the behavior assignment problem}

\author[DEI]{Francesca Mazzolani} 
\author[DEI]{Michelangelo Bin}
\author[DEI]{Lorenzo Marconi}

\address[DEI]{Department of Electrical, Electronic, and Information Engineering ``Guglielmo Marconi'', University of Bologna, Bologna, Italy\\
  (e-mail: f.mazzolani@unibo.it; michelangelo.bin@unibo.it; lorenzo.marconi@unibo.it).}

\begin{abstract}
This paper introduces the asymptotic behavior assignment problem for nonlinear systems. Given a controlled system and a reference system with an ``open'' input, the goal is to design a regulator such that, for every admissible input, the asymptotic input--output behavior of the closed-loop system reproduces that of the reference. This formulation captures, as special cases, classical model matching, disturbance rejection, and master--slave synchronization, but does not assume that an explicit tracking or regulation error is available for feedback. Motivated by nonlinear output regulation, we discuss how steady-state concepts for autonomous systems must be adapted when the closed-loop dynamics is not autonomous. In a SISO normal-form setting we devise sufficient conditions for the solution of the behavior assignment problem by introducing a synchrony-detection signal whose convergence to zero is equivalent to successful behavior assignment, thereby reducing the problem to a standard stabilization one. Two examples---a tunnel-diode circuit with multiple input-dependent equilibria, and a pendulum frequency-matching problem---illustrate how the proposed framework avoids artificially selecting a specific steady state.
\end{abstract}

\begin{keyword}
Model matching; output regulation; internal model principle; synchronization; nonlinear systems.
\end{keyword}

\end{frontmatter}
\section{Introduction}
\subsection{The behavior assignment problem}

Consider a controlled system of the form
\begin{equation}
  \dot x = f(x,u,t), \qquad y = h(x,t)
  \label{plant}
\end{equation}
with state $x(t) \in \mathbb{R}^n$, control input $u(t) \in \mathbb{R}^m$, and output $y(t) \in \mathbb{R}^p$. The explicit time dependence
in~\eqref{plant} is used to model non-stationary effects or
additional exogenous inputs, such as disturbances, acting on the
controlled system. Let $v$ be an ``open'' input, belonging to a
given class $\mathcal{V}$, and consider a reference system
\begin{equation}
  \dot x^\star = f^\star(x^\star,v,t), \qquad y^\star = h^\star(x^\star,t)
  \label{reference}
\end{equation}
with state $x^\star(t) \in \mathbb{R}^\nu$ and output $y^\star(t) \in \mathbb{R}^p$.
The reference system is not necessarily physically available. In
many cases it should be thought of as a virtual model encoding the
desired long-term input--output behavior.

The controller has internal state $\eta(t)\in\mathbb{R}^{n_\eta}$ and is assumed to have the general form
\begin{equation}
  \dot \eta = c(\eta,y,v,t), \qquad
  u = \alpha(\eta,y,v,t).
  \label{controller}
\end{equation}
The interconnection of~\eqref{plant} and~\eqref{controller} yields a closed-loop system with state
$(x,\eta)$, input $v$, and output $y$.

To formalize the desired matching between~\eqref{plant} and~\eqref{reference} we work with asymptotic solutions. Consider a
generic system of the form
\begin{equation}
  \dot \chi = \gamma(\chi,w,t),
  \label{eq:chi-system}
\end{equation}
with state $\chi$ and input $w$. A function $\chi_{\mathrm{as}}$ is called an \emph{asymptotic
solution} of~\eqref{eq:chi-system}, corresponding to the input
$w$, if it is differentiable and
\[
  \lim_{t\to\infty}
  \bigl|\dot\chi_{\mathrm{as}}(t) -
        \gamma(\chi_{\mathrm{as}}(t),w(t),t)\bigr| = 0.
\]
We observe that, $\chi_{\mathrm{as}}$ may not be an exact
trajectory of~\eqref{eq:chi-system} for all times, since it satisfies the equation~\eqref{eq:chi-system} only asymptotically. As a simple example, consider the system $\dot \chi = -\chi + e^{-t}$. The function $\chi_{\mathrm{as}}(\cdot) = 0$ is an asymptotic solution, yet not a solution of the system. In our framework, asymptotic solutions play
the role of ``steady states'' for systems with inputs, and will be
used to describe the long-term behavior of both the reference and
the controlled system.

With these elements at hand, the behavior assignment problem is defined as follows: design a
regulator of the form~\eqref{controller} so that, for every
$v\in\mathcal{V}$ and every output $y$ of the closed-loop system
corresponding to $v$, there exists an asymptotic solution
$x^\star_{\mathrm{as}}$ of the reference system~\eqref{reference}, with corresponding output
$y^\star_{\mathrm{as}}$, such that
\[
  \lim_{t\to\infty} \bigl|y(t) - y^\star_{\mathrm{as}}(t)\bigr| = 0.
\]
Equivalently, for each admissible input $v$ the controller should be able to induce, in the limit, an input--output
behavior of the reference system, using only the output
$y$, and the external input $v$.\\
\textit{Remark.} For simplicity, we consider the case where the output $y$ coincides with the signal available for feedback in~\eqref{controller}. Nevertheless, the framework can be straightforwardly extended to the case in which~\eqref{controller} processes a measured output different from $y$, such as the full state $x$ (see the examples in Section~$3$).

\subsection{Discussion and relation with classical paradigms}

The aim of assigning a prescribed input--output behavior to a controlled system is closely related to several classical control paradigms, most notably nonlinear model matching, and output regulation. At the same time, the formulation adopted here differs from these paradigms in structural ways that are relevant for both problem statements and solution methods.

Nonlinear model matching typically requires that the closed-loop input--output map matches that of a given reference model. In geometric approaches such as \cite{DiBenedettoGrizzle1994}, the controlled system and the reference model are combined into an extended system, an output error is defined, and one looks for an error-feedback law rendering a suitable zero-error manifold asymptotically stable. This leads to sharp necessary and sufficient conditions, expressed in terms of invariant distributions and controlled invariants, and to an internal-model-like interpretation: on the zero-error manifold, the closed-loop dynamics embeds the target dynamics, so that an internal model of the reference is present in the controlled system--regulator combination. However, the price of this precision is an explicit reliance on the error system, and on the specific target trajectory selected by the initialization of the copy of the reference model.

The problem considered in this paper is also closely related to output
regulation \citep{SICON_2008}. If one regards the reference system~\eqref{reference}
as an exosystem and imagines that its output $y^\star$ is available,
then the natural quantity to be regulated would be the error
\[
  e(t) = y(t) - y^\star(t),
\]
and the objective of asymptotic behavior assignment could be restated
as driving $e(t)$ to zero for all admissible inputs $v$. From
this viewpoint,~\eqref{reference} plays the role of the signal
generator in the output regulation framework, and the present setting
is a variant of nonlinear regulation in the sense of
\cite{ByrnesIsidori2003}.

There are, however, two structural peculiarities that distinguish our formulation from canonical output regulation problems. First, the reference dynamics~\eqref{reference} is inherently ``open'': it is driven by the external input $v$, which is not generated by an
autonomous exosystem but belongs to a prescribed class
$\mathcal{V}$. As a consequence, one cannot simply form an autonomous
extended system by interconnecting the controlled system, the exosystem and the regulator, and
analyze its limit sets as in
\cite{ByrnesIsidori2003}. The notion of steady state
must instead be adapted to this open setting, which motivates the use
of asymptotic solutions associated with given input signals.

Second, and equally important, the error $e(t)=y(t)-y^\star(t)$ may not be
physically available in general: the reference system does not need to be implemented, and neither $x^\star$ nor $y^\star$ are required to be measured. The controller only has access to the output $y$ and to the input $v$, and any quantity playing the role of a regulation error must be constructed indirectly from these signals. This lack of a measurable error prevents a direct application of classical output regulation designs and of their internal model principles. The constructions developed in this paper can be seen as a way to recover, in this open and error-free setting, some of the structural insights of regulation theory while working directly at the level of asymptotic behaviors.

The formulation proposed in this paper takes a different starting point. The reference system~\eqref{reference} is treated as an open input--output behavior, and the desired specification is encoded by the collection of all asymptotic input--output pairs $(v,y^\star)$ that it can generate. The controller has access to the output $y$ and to the input $v$, but, in general, it may not have access to the state or output of the reference system (which could be a ``virtual system''). Asymptotic behavior assignment is achieved when, for each admissible input $v$, the corresponding closed-loop output $y$ asymptotically coincides with some asymptotic output $y^\star$ of the reference. This viewpoint is reminiscent of \cite{Willems1991} behavioral framework, in that it focuses on sets of trajectories rather than on a specific error trajectory, but here it is developed directly at the level of state-space representations, without resorting to the full behavioral machinery.

Two simple examples, discussed in detail in Section 3, help illustrate the distinctions. First, consider a tunnel-diode circuit with state $(z,\mu)$ and dynamics \begin{equation}
\label{tdsys}
  C\dot z = \mu - h(z), \qquad
  L\dot \mu = -z - R \mu + u + d,
\end{equation} where $C$ and $L$ are the capacitance and inductance, and $R$ is the resistance. The nonlinearity $h$ has an ``S'' shape, and $d$ is an external disturbance, see Example 2.1 in \cite{Khalil}. For a given constant input $v$, the system with $u+d=v$ may have up to three distinct equilibria. A disturbance rejection problem can be posed by taking as reference the disturbance-free system (with $d=0$) driven by the same constant input. An error-based solution that embeds this disturbance-free system in the controller, forms an error between the measured $\mu$ and the replica, and uses it for feedback, inevitably selects one equilibrium through its initialization: the closed loop is forced to a specific steady state. In contrast, an asymptotic behavior assignment design aims at restoring the prescribed relation between the constant input and the asymptotic output, while letting the actual equilibrium reached depend on the initial condition of the controlled system. The multiplicity of equilibria is preserved rather than artificially destroyed by the controller.

In a second example, the reference system is an undamped pendulum with a given natural frequency, while the controlled system is a pendulum with a different natural frequency. Classical model-following schemes tend to impose one particular periodic trajectory of the reference pendulum on the controlled system through an embedded model and an error feedback. In the present framework, the specification is instead formulated in terms of asymptotic input--output behavior, requiring for instance, an asymptotically matching frequency and amplitude, while allowing phase and internal motions to depend on initial conditions. These examples motivate the need for tools that work directly at the level of asymptotic behaviors.

The rest of the paper develops these ideas. Section 2 introduces the notion of synchrony detection and, in a SISO normal-form setting, shows how to construct a synchrony detector whose output vanishes if and only if the desired behavior assignment is achieved, thereby reducing the problem to a stabilization task. Section 3 presents the motivating examples, and Section 4 concludes with remarks on internal-model-like structures and on the relationship between the present viewpoint and existing internal model principles.
\section{A pathway to the solution}
\subsection{On the concept of synchrony detection}

As discussed in the previous section, most existing design frameworks rely on the availability of an explicit error variable that inevitably forces a specific steady-state trajectory. In our setting, instead, we aim to develop controllers that do not force a specific steady state. This naturally raises the question of whether it is possible to construct a \emph{proxy}---derived from the available measurements---that fulfills the same conceptual role of a regulation error without necessarily forcing a unique steady state.

In this regard, we draw inspiration from \citep{WANG2020}, where a \emph{washout filter} is used to convert an auxiliary measurement---non-vanishing in steady state---into a filtered signal that converges to zero while still retaining the dynamical information required for stabilizing the internal dynamics of the controlled system. In that setting, the washout filter is used to remove the steady-state component from the output, ensuring that the signal fed to the regulator only excites the relevant transient dynamics and does not contaminate the adaptation mechanism with non-decaying terms. The resulting filtered output acts as a “virtual error”, effectively reproducing the blocking-zero mechanism typical of pre-processing internal-model architectures \citep{bin_internal_2022}.

Motivated by this idea, while following a post-processing approach, we introduce a module named \textit{synchrony detector} that operates on the measured output of the controlled system~\eqref{plant} and generates a filtered output signal $\varepsilon$. This detector can be generically expressed as
\begin{equation}\label{generic_detector}
\begin{aligned}
\dot\sigma &= \phi(\sigma, y, v, t) \\
\varepsilon &= \gamma(\sigma, y, v, t)
\end{aligned}
\end{equation}
where $\sigma(t) \in \mathbb{R}^{n_\sigma}$ is its internal state.
This signal, which we will refer to as the \textit{synchronization error}, plays the role of an estimated error proxy and provides a measure of how closely the input--output mapping of the controlled system~\eqref{plant} matches that of the desired behavior, without necessarily using any direct measurement from the reference system~\eqref{reference}. The synchrony detector~\eqref{generic_detector} must therefore be designed so that $\varepsilon$ vanishes at steady state while still capturing the relevant input--output information of the reference system~\eqref{reference}. We formalize this requirement through the following property.

\textit{Definition 1.}\label{Synch_property}
System~\eqref{generic_detector} is said to have the synchrony detection property for the reference system~\eqref{reference} and for the class $\mathcal{V}$ if, for every $v \in \mathcal{V}$, and for every output $y$ generable by~\eqref{plant}, the condition $\lim_{t\to\infty} \varepsilon(t)=0$ implies that there exists an asymptotic solution $x_{\mathrm{as}} ^\star$ of~\eqref{reference} corresponding to $v$, with output $y_{\mathrm{as}}^\star(t) = h^\star(x_{\mathrm{as}}^\star (t),t)$, such that \begin{equation}
    \notag
    \lim_{t\to \infty} |y(t) - y^{\star}_{\mathrm{as}}(t)| =0. 
\end{equation}
\subsection{The case of SISO normal forms}
In the following analysis, we restrict our focus to single--input single--output systems (i.e., $m=p=1$), and assume that system~\eqref{plant} has a well-defined global relative degree $r \geq 1$, and a globally defined normal form. We then rewrite~\eqref{plant} in the normal coordinates as \begin{equation}
    \begin{split}
    \label{plant_nf}
        \dot z &= f_0(z,\mu,t)\\
        \dot \mu_i &= \mu_{i+1} \quad \text{with\quad} i = 1,\dots, r-1\\
        \dot \mu_r &= q(z,\mu,t) + b(z, \mu,t) u
    \end{split}
\end{equation}
with state $(z(t), \mu(t)) \in \mathbb{R}^{n_z}\times \mathbb{R}^r$, where $n_z = n - r$, and output $y(t) =\mu_1(t) \in \mathbb{R}$. Similarly, we also assume that the reference system~\eqref{reference} can be globally expressed in the following normal form
\begin{equation}
    \begin{split}
    \label{reference_nf}
        \dot z^ \star &= f_0 ^ \star (z^\star, \mu^\star,t)\\
        \dot \mu_i^\star &= \mu_{i+1}^\star \quad \text{with\quad} i = 1,\dots, r-1\\
        \dot \mu_r ^\star &= q^\star (z^\star, \mu^\star,t) +  b^\star(z^\star, \mu^\star,t)v
    \end{split}
\end{equation}
with state $(z^\star(t), \mu^\star(t)) \in \mathbb{R}^{n_{z^\star}} \times \mathbb{R}^r$, where $n_{z^\star}=\nu-r$. 
To ease the forthcoming derivations, we assume that the reference system has the same relative degree $r$ of the controlled system. Instead, we do not impose any constraint on the dimension or structure of the internal dynamics. In particular, the dimension $n_{z^\star}$ of the state $z^\star$ in~\eqref{reference_nf} is not required to coincide with the dimension $n_z$ of $z$ in~\eqref{plant_nf}.

Under these assumptions, we can devise a constructive approach to the design of the synchrony detector~\eqref{generic_detector}, which can simply be taken as a replica of the reference system~\eqref{reference_nf} with the measured output $\mu_1(t)$ of the system~\eqref{plant_nf} directly incorporated into its structure as a driving signal. In particular, a synchrony detector for the reference~\eqref{reference_nf} can be built as follows
\begin{equation}
    \begin{split} \label{filter}
        \dot \zeta &= f_0^\star (\zeta, \mu_1, \bar\xi,t)\\
        \dot \xi_{i} &=  \xi_{i+1} \quad \text{with\quad} i = 1,\dots, r-1\\
        \dot \xi_r &= q^\star(\zeta, \mu_1, \bar \xi,t) + b^\star(\zeta, \mu_1, \bar \xi ,t)v \\
        \varepsilon &= \mu_1 - \xi_1
    \end{split}
\end{equation}
with states $(\zeta (t), \xi(t)) \in \mathbb{R}^{n_{z^\star}} \times \mathbb{R}^{r}$, output $\varepsilon$, and where $\bar \xi (t) = (\xi_2(t), \dots, \xi_r(t))$.

The following theorem shows that indeed~\eqref{filter} has the detection property.
\begin{thm}\label{TH2}
Let $\mathcal{V}$ be a set of bounded functions, and assume $f_0^\star$, $q^\star$, and $b^\star$ are uniformly continuous in all their arguments. Then, system~\eqref{filter} has the detection property for the reference system~\eqref{reference_nf} and for the class $\mathcal{V}$.
\end{thm}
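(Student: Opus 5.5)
The natural candidate for the asymptotic solution is the detector state itself. Fix $v\in\mathcal{V}$ and an output $y=\mu_1$ generated by~\eqref{plant_nf}, and suppose $\varepsilon(t)=\mu_1(t)-\xi_1(t)\to 0$. We define
\[
x^\star_{\mathrm{as}}(t) := \bigl(\zeta(t),\xi_1(t),\xi_2(t),\dots,\xi_r(t)\bigr),
\]
where $(\zeta,\xi)$ is the trajectory of~\eqref{filter} driven by $\mu_1$ and $v$. Its output is $y^\star_{\mathrm{as}}=\xi_1$, so $|y-y^\star_{\mathrm{as}}|=|\varepsilon|\to 0$ holds by hypothesis. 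It then remains only to show that $x^\star_{\mathrm{as}}$ is an asymptotic solution of~\eqref{reference_nf} corresponding to $v$. Differentiability is immediate, since it is a solution of an ODE; we assume, as implicitly in Definition~1, that the trajectories involved exist for all $t\ge 0$.

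We then compute the residual componentwise. For the chain components $i=1,\dots,r-1$ we have $\dot\xi_i-\xi_{i+1}=0$ exactly, so these rows contribute nothing. For the $\zeta$-component the residual is
\[
f_0^\star(\zeta,\mu_1,\bar\xi,t)-f_0^\star(\zeta,\xi_1,\bar\xi,t).
\]
For the $\xi_r$-component it is
\[
\bigl[q^\star(\zeta,\mu_1,\bar\xi,t)-q^\star(\zeta,\xi_1,\bar\xi,t)\bigr]+\bigl[b^\star(\zeta,\mu_1,\bar\xi,t)-b^\star(\zeta,\xi_1,\bar\xi,t)\bigr]v(t).
\]
In each bracket the two evaluation points differ only in the $\mu_1$-slot, and by exactly $\varepsilon(t)$.

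The key step is to show these differences vanish. By uniform continuity of $f_0^\star$ and $q^\star$, for every $\epsilon>0$ there is a $\delta>0$ such that $|\varepsilon(t)|<\delta$ forces the corresponding difference below $\epsilon$. This holds uniformly in the other arguments $(\zeta,\bar\xi,t)$, which may be unbounded. Uniformity is precisely what lets us avoid any boundedness assumption on the detector or plant trajectories, so it is the main point to handle carefully. Since $\varepsilon(t)\to0$, the $f_0^\star$ and $q^\star$ terms tend to zero.

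For the $b^\star$ term we use boundedness of $v\in\mathcal{V}$: with $|v(t)|\le \bar v$, the term is bounded by $\bar v$ times a difference that vanishes by the same uniform-continuity argument. Summing the components gives
\[
\lim_{t\to\infty}\bigl|\dot x^\star_{\mathrm{as}}(t)-F^\star(x^\star_{\mathrm{as}}(t),v(t),t)\bigr|=0,
\]
where $F^\star$ denotes the right-hand side of~\eqref{reference_nf}. Hence $x^\star_{\mathrm{as}}$ is an asymptotic solution, and the detection property follows.
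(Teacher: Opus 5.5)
Your proposal is correct and follows essentially the same argument as the paper: you take the detector trajectory $(\zeta,\xi)$ as the asymptotic solution, so that $|y-y^\star_{\mathrm{as}}|=|\varepsilon|\to 0$, and you show the $\zeta$- and $\xi_r$-residuals vanish using uniform continuity of $f_0^\star$, $q^\star$, $b^\star$ together with boundedness of $v$ for the $b^\star v$ term. You are, if anything, slightly more explicit than the paper about two points: that trajectories must exist for all $t\ge 0$, and that uniformity is what removes the need to assume the trajectories are bounded.
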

\begin{pf}
Pick $v \in \mathcal{V}$ and let $(z, \mu)$ be a solution of~\eqref{plant_nf}. Moreover, let $(\zeta, \xi)$ be a solution of~\eqref{filter} corresponding to $v$ and $y = \mu_1$ for some arbitrary input $u$.

Assume that $\lim_{t \to \infty} \varepsilon(t) = 0$. Then $\lim_{t \to \infty}|\mu_1(t) - \xi_1(t)| = 0$. Since $f_0^\star$, $q^\star$, and $b^\star$ are uniformly continuous, and $v$ is bounded, it follows that 
\begin{equation}\notag
        \begin{split}
            \lim_{t \to \infty} &|\dot \zeta  - f_0^\star (\zeta, \xi, t)|  =\\ =&\lim_{t \to \infty} | f_0^\star (\zeta, \mu_1, \xi_2, \dots, \xi_r,t) - f_0^\star (\zeta, \xi_1, \xi_2, \dots, \xi_r,t)| \\ =&\ 0
            \end{split}
\end{equation}
and
\begin{equation}\notag
        \begin{split}
            \lim_{t \to \infty} &|\dot \xi_r  - q^\star(\zeta, \xi,t) - b^\star(\zeta, \xi,t)v(t)|   =\\ =&\lim_{t \to \infty} |q^\star(\zeta, \mu_1, \bar \xi,t) + b^\star(\zeta, \mu_1, \bar \xi,t)v \\ &- q^\star(\zeta, \xi, t) - b^\star(\zeta, \xi,t)v| \\ =&\ 0 
        \end{split}
\end{equation}

Then, $(z^{\star}_{\mathrm{as}}(t), \mu^{\star}_{\mathrm{as}}(t)) := (\zeta(t), \xi(t))$
is an asymptotic solution of~\eqref{reference_nf} (corresponding to the previously fixed $v$) with output $y^{\star}_{\mathrm{as}} = \mu_{\mathrm{as},1} ^\star$ and $\lim_{t \to \infty} |y(t) - y^{\star}_{\mathrm{as}}(t)| = 0. $\hspace*{\fill} $\square$ 
\end{pf}
The interconnection of the synchrony detector~\eqref{filter} and the controlled system~\eqref{plant_nf}, leads to an \emph{extended system} that can be expressed as
\begin{equation} \label{extended_system}
    \begin{split}
        \dot z &= f_0(z,\mu,t)\\
        \dot \mu_i &= \mu_{i+1}, \qquad \qquad \text{with\quad} i = 1,\dots, r-1\\
        \dot \mu_r &= q(z,\mu,t) + b(z, \mu,t) u\\
        \dot \zeta &= f_0^\star (\zeta, \mu_1, \bar\xi,t)\\
        \dot \xi_{i} &=  \xi_{i+1}, \qquad \qquad \text{with\quad} i = 1,\dots, r-1\\
        \dot \xi_r &= q^\star(\zeta, \mu_1, \bar \xi,t) + b^\star(\zeta, \mu_1, \bar \xi ,t)v \\
        \varepsilon&= \mu_1- \xi_1
    \end{split}
\end{equation}
Let us define $\epsilon:=\mu-\xi$ (in particular, $\epsilon_1=\varepsilon$).
Changing coordinates from $\mu$ to $\epsilon$, we obtain the \emph{error dynamics}
\begin{equation}\label{error_system}
\begin{aligned}
\dot z =&\ f_0(z,{\epsilon + \xi}, t)\\
\dot \zeta =&\ f_0^\star(\zeta,{\varepsilon+ \xi_1},\bar \xi,t)\\
\dot \xi_i =&\ \xi_{i+1}, \qquad \qquad \text{with\quad} i = 1,\dots, r-1\\
\dot \xi_r =&\ q^\star(\zeta,{ \varepsilon+ \xi_1},\bar \xi,t) + b^\star(\zeta,{\varepsilon+ \xi_1}, \bar \xi,t)\,v\\
\dot \epsilon_i =&\ \epsilon_{i+1}, \qquad \qquad \text{with\quad} i = 1,\dots, r-1\\
\dot \epsilon_r =&\ q(z,{\epsilon + \xi},t) + b(z,{\epsilon + \xi},t)\,u
           - q^\star(\zeta,{\varepsilon+ \xi_1},\bar \xi,t)\\& - b^\star(\zeta,{\varepsilon+ \xi_1},\bar \xi,t)\,v
\end{aligned}
\end{equation}
In these terms, the behavior assignment problem thus reduces to the design of a feedback controller capable of driving the synchronization error $\varepsilon$ to zero. In turn, this is an output stabilization problem, for which numerous well-established methodologies are available in the literature, including high-gain techniques such as \cite{Freidovich2006}, or \cite{SICON_2008}, \cite{Teel_Praly_1995}, and \cite{Isidori1995}.
In particular, consider a stabilizer of the form
\begin{equation}\label{prop_contr}
\begin{split}
    \dot \eta &= \varrho(\eta,\xi, \epsilon, v)\\
    u &= \varphi(\eta,\xi, \epsilon , v)   
\end{split}
\end{equation}
with state $\eta(t) \in \R^{n_\eta}$. 
Then, the closed-loop system reads as
\begin{equation} \label{overall_reg}
    \begin{split}
        \dot z =&\ f_0 (z, \epsilon + \xi, t)\\
        \dot \zeta =&\ f_0^\star (\zeta, {\varepsilon + \xi_1, \bar \xi},t)\\
        \dot \xi_{i} =&\  \xi_{i+1}, \qquad \qquad \text{with\quad} i = 1,\dots, r-1\\
        \dot \xi_r =&\ q^\star(\zeta, {\varepsilon + \xi_1, \bar \xi} ,t) + b^\star(\zeta,{\varepsilon + \xi_1, \xi},t)v\\
        \dot \epsilon_i =& \epsilon_{i+1}, \qquad \qquad \text{ with\quad} i = 1,\dots, r-1\\
        \dot \epsilon_r =&\ q(z,{\epsilon + \xi},t) + b(z,{\epsilon + \xi},t)\,u
           - q^\star(\zeta,{\varepsilon+ \xi_1},\bar \xi,t)\\& - b^\star(\zeta,{\varepsilon+ \xi_1},\bar \xi,t)\,v\\
        \dot \eta =&\ \varrho(\eta, \xi, \epsilon, v)\\
        u =&\ \varphi(\eta, \xi, \epsilon, v)        
    \end{split}
\end{equation}
and the following result holds.
\begin{thm}
Let~\eqref{prop_contr} be such that
$\mathcal{B} = \{(z, \zeta, \xi, \epsilon, \eta) \in \R^{n_z + n_{z^\star} + 2r + n_{\eta} }: \epsilon = 0\}$
is asymptotically stable for the closed-loop system~\eqref{overall_reg}.
Then, the control system~\eqref{filter},~\eqref{prop_contr}
solves the behavior assignment problem for~\eqref{plant_nf}--\eqref{reference_nf}.
\end{thm}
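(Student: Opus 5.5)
The idea is to turn asymptotic stability of $\mathcal{B}$ into convergence of the synchronization error, and then apply Theorem~\ref{TH2}. We work in the coordinates of~\eqref{overall_reg}. The map $(z,\mu,\zeta,\xi,\eta)\mapsto(z,\zeta,\xi,\epsilon,\eta)$ with $\epsilon=\mu-\xi$ is a global diffeomorphism, so it is linear and invertible. Every closed-loop solution of~\eqref{plant_nf},~\eqref{filter},~\eqref{prop_contr} therefore corresponds to a solution of~\eqref{overall_reg}, and conversely. We also read the controller~\eqref{prop_contr} as a controller of the form~\eqref{controller}, with state $(\zeta,\xi,\eta)$. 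Here $\epsilon=\mu-\xi$ is built from $\mu=(y,\dot y,\dots,y^{(r-1)})$ and the detector state. Following the paper's Remark, we allow the measured signal to differ from $y$; in the normal form this just means $\mu$ is available.

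The first step uses the hypothesis that $\mathcal{B}$ is asymptotically stable, which we take to mean attractive on the relevant domain (globally, given the global normal-form setting). Then every closed-loop solution satisfies $\operatorname{dist}\bigl((z,\zeta,\xi,\epsilon,\eta)(t),\mathcal{B}\bigr)\to 0$. The set $\mathcal{B}$ constrains only $\epsilon$, so this distance equals $|\epsilon(t)|$, and hence $\epsilon(t)\to0$. In particular $\varepsilon(t)=\epsilon_1(t)=\mu_1(t)-\xi_1(t)\to0$.

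The second step invokes Theorem~\ref{TH2}, under its standing assumptions: $\mathcal{V}$ is bounded and $f_0^\star,q^\star,b^\star$ are uniformly continuous. Fix $v\in\mathcal{V}$ and any closed-loop output $y=\mu_1$. This $y$ is an output generable by~\eqref{plant_nf}, namely the one produced by the particular input $u=\varphi(\cdot)$. The detector state $(\zeta,\xi)$ is the solution of~\eqref{filter} driven by this $y$ and $v$. Since $\varepsilon\to0$, the detection property of Definition~1 gives an asymptotic solution $(z^\star_{\mathrm{as}},\mu^\star_{\mathrm{as}})=(\zeta,\xi)$ of~\eqref{reference_nf} with $|y-y^\star_{\mathrm{as}}|\to0$. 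This is exactly the behavior assignment requirement.

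The main obstacle is technical rather than conceptual. We need solutions to exist on $[0,\infty)$ so that the limits make sense, and asymptotic stability has to be understood globally, or else the conclusion holds only for initial conditions in the basin of attraction. We will state this domain caveat explicitly. We will also check that $(\zeta,\xi)$ is differentiable, as the definition of an asymptotic solution requires. This holds because it is a Carathéodory or classical solution of~\eqref{filter} and $v$ is continuous, or more generally holds almost everywhere.
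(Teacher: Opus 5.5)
Your proposal is correct and follows the route the paper intends: the theorem is stated without a written proof, but the surrounding text and the examples argue exactly this way, namely attractivity of $\mathcal{B}$ gives $|\epsilon(t)|=\mathrm{dist}(\cdot,\mathcal{B})\to 0$, hence $\varepsilon\to 0$, and the detection property of Theorem~\ref{TH2} then supplies the asymptotic solution $(\zeta,\xi)$ with $|y-y^\star_{\mathrm{as}}|\to 0$. The caveats you make explicit are all left implicit in the statement, so flagging them is appropriate: bounded $\mathcal{V}$ and uniform continuity inherited from Theorem~\ref{TH2}, global attractivity, forward completeness, and measuring $\mu$ rather than only $y$ (one small slip: the change of coordinates is a diffeomorphism because it is linear and invertible, not the other way round).
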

\section{Examples}
\subsection{Tunnel diode with constant input}
As a first illustrative study case, we consider the tunnel-diode circuit~\eqref{tdsys} driven by a constant voltage source $u$ and subject to an external disturbance $d$. We deal with a state feedback solution, namely both $z$ and $\mu$ are measured. 
The desired behavior is described by the following reference system
\begin{equation}
\begin{aligned}
C\dot z^\star &= \mu^\star - h(z^\star) \\
L\dot \mu^\star &= -z^\star - R \mu^\star + v
\end{aligned}
\label{eq:model}
\end{equation}
in which $(z^\star, \mu^\star)$ is the state,   $y^\star = \mu^\star$ is the reference output,  and  $h$ is given by
$h(z^\star) = 17.76 \,z^\star - 103.79 \,z^\star{}^{2}
 + 229.62 \,z^\star{}^{3} - 226.312 \, z^\star{}^{4} + 83.72 \, z^\star{}^{5}
$.
The reference system is in normal form and has unitary relative degree. For $v=1.2\,\mathrm{V}$, the model admits three different equilibrium points, two of which are stable and one is unstable, revealing the bistable nature of the device.

Following the design methodology of Section 2, we introduce the synchrony detector
\begin{equation}
\begin{aligned}
C\dot \zeta &= \mu - h(\zeta)\\
L\dot \xi &= -\zeta - R \mu + v\\
\varepsilon &= \mu - \xi
\end{aligned}
\label{eq:detector_TD}
\end{equation}
According to Theorem~\ref{TH2}\footnote{We observe that, because of the form of $h(\cdot)$, the key assumption of Theorem~\ref{TH2} asking that $f_0^\star$ is uniformly continuous is not fulfilled. However, since the trajectories of~\eqref{eq:model} originated from compact sets are bounded, the assumption in question can be forced by substituting $h$ with a uniformly continuous function matching it within the set where trajectories evolve.}, the detector satisfies the detection property for the reference system \eqref{eq:model}. With $\mu = \varepsilon + \xi$, the extended error dynamics~\eqref{overall_reg} reads
\begin{equation}
\begin{aligned}
C\dot z &=\varepsilon + \xi - h(z)\\
C\dot \zeta &= \varepsilon + \xi - h(\zeta)\\
L \dot \xi &= - \zeta - R(\varepsilon + \xi)+v\\
L\dot \varepsilon &= \zeta - z +  u + d - v
\end{aligned}
\end{equation}
Since $d(t)$ in~\eqref{tdsys} is constant, a simple integrator can be incorporated in the controller to eliminate the disturbance. The resulting (state feedback) regulator is then given by
\begin{equation}
\begin{aligned}
C\dot \zeta &= \varepsilon + \xi - h(\zeta) \\
L\dot \xi &= -\zeta - R (\varepsilon + \xi) + v \\
\dot \eta &= - k_2 \varepsilon\\
u &= z - \zeta + v - k_1 \varepsilon + \eta 
\end{aligned}
\end{equation}
with gains $k_1, k_2 > 0$.
By letting $\tilde \eta = \eta + d$, it turns out that the closed-loop system is
\[
\begin{aligned}
 \dot \varepsilon &= - k_1 \varepsilon + \tilde \eta\\
 \dot {\tilde  \eta} &= - k_2 \varepsilon
\end{aligned}
\]
which has the origin asymptotically stable, driving a system with state $(z,\zeta,\xi)$, which has bounded trajectories for vanishing input $\varepsilon$. The controller thus solves the problem.
Simulations in Fig.~\ref{fig:tunnel_diode_simulation} and Fig.~\ref{fig:tunnel_diode_simulation_1} show the results for different initial conditions of the closed-loop system. Fig.~\ref{fig:tunnel_diode_simulation} illustrates the synchronization error $\varepsilon = \mu - \xi$, which converges to zero in all considered initial conditions, confirming correct reproduction of the reference behavior. Fig.~\ref{fig:tunnel_diode_simulation_1} shows the corresponding output trajectories: depending on the initial condition, the output converges to one of the two stable equilibria of the reference system. Different initial states lead to different trajectories, all satisfying the synchronization requirement.
Most notably, the closed-loop trajectory is not predetermined by the controller alone; different initial conditions of the controlled system give rise to different trajectories, all satisfying the required synchronization property. This is the key feature of the proposed approach. Rather than enforcing convergence to a single steady-state output, it assigns a {\it family} of admissible asymptotic behaviors. By preserving the dependence on the initial state, the design allows the system to retain flexibility, so that the long-term behavior reflects both the controlled system’s intrinsic dynamics and its initialization.
\begin{figure}
\begin{center}
\includegraphics[width=8.4cm, height=4.5cm]{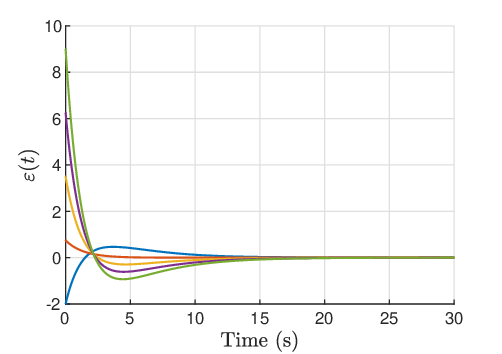}   
\caption{Synchronization error $\varepsilon(t)$ for different initial conditions of the controlled system. Results shown for $k_1 = 5$ and $k_2 = 1$.} 
\label{fig:tunnel_diode_simulation}
\end{center}
\end{figure}
\begin{figure}
\begin{center}
\includegraphics[width=8.4cm, height=4.5cm]{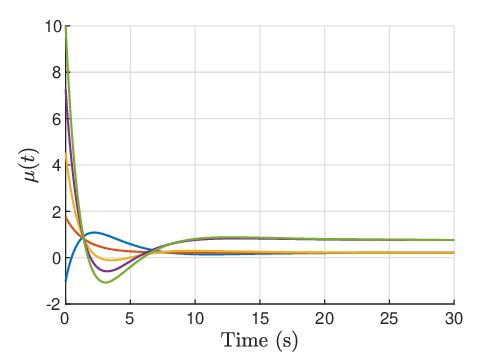}  
\caption{Controlled output $y(t)=\mu(t)$ for different initial conditions.} 
\label{fig:tunnel_diode_simulation_1}
\end{center}
\end{figure}
\subsection{Frequency-Matching}
As a second illustrative example, we consider a controlled undamped pendulum of the form
\begin{equation}
\begin{aligned}
\dot{\mu}_1 &= \mu_2\\
\dot{\mu}_2 &= -\omega_p^2 \sin(\mu_1) + u
\end{aligned}
\label{eq:pendulum_plant}
\end{equation}
where $\mu_1$ is the angle, $\mu_2$ is its derivative, $u$ is the control torque, and $\omega_p$ is the natural frequency of the pendulum. We take as output the angle by letting $y = \mu_1$. The reference system is given by
\begin{equation}
\begin{aligned}
\dot{\mu}^\star_{1} &= \mu^\star_{2}\\
\dot{\mu}^\star_{2} &= -\omega_m^2 \sin(\mu^\star_{1}) + v
\end{aligned}
\label{eq:pendulum_model}
\end{equation}
with input $v\in\mathbb{R}$, natural frequency $\omega_m \ne\omega_p$, and reference output $y^\star = \mu_{1}^\star$. 
Depending on the initial condition, the reference~\eqref{eq:pendulum_model} can exhibit bounded periodic solutions, equilibria, or continuous rotational motions.
This system is already expressed in normal form with relative degree $r=2$ between input $v$ and output $y^\star$.

A synchrony detector for the reference system (\ref{eq:pendulum_model}) and for the class of bounded inputs $\mathcal{V}$ can be designed as 
\begin{equation}
\begin{aligned}
\dot{\xi}_1 &= \xi_2\\
\dot{\xi}_2 &= -\omega_m^2 \sin(\mu_1) + v\\
\varepsilon &= \mu_1 - \xi_1
\end{aligned}
\label{eq:pendulum_detector}
\end{equation}
with state $(\xi_1,\xi_2)\in\mathbb{R}^2$. System \eqref{eq:pendulum_detector} satisfies the detection property in view of Theorem~\ref{TH2}. Defining $\epsilon_1 = \varepsilon$ and $\epsilon_2 = \mu_2 - \xi_2$, the corresponding error dynamics are given by
\begin{equation}
    \begin{aligned}
        \dot \xi_1 =&\ \xi_2\\
        \dot \xi_2 =&\ -\omega_m ^ 2 \sin (\varepsilon + \xi_1) + v \\
       \dot \epsilon_1 =&\ \epsilon_2\\
        \dot \epsilon_2 =&\ -\omega_p^2 \sin (\varepsilon + \xi_1) + u +\omega_m ^2  \sin (\varepsilon + \xi_1) -v 
    \end{aligned}
\end{equation}
To enforce asymptotic synchronization between \eqref{eq:pendulum_plant}-\eqref{eq:pendulum_model}, we select the following stabilizing law
\begin{equation}
u = (\omega_p^2 - \omega_m^2) \sin(\varepsilon + \xi_1) - k_p \varepsilon - k_d \epsilon_2 + v
\label{eq:pendulum_control}
\end{equation}
with gains $k_p,k_d>0$. Thus, the overall controller reads
\begin{equation}
    \begin{split}
        \dot{\xi}_1 =&\ \xi_2\\
        \dot{\xi}_2 =&\ -\omega_m^2 \sin(\varepsilon + \xi_1) + v
    \end{split}
\end{equation}
with control input $u$ defined by \eqref{eq:pendulum_control}.
As in the previous example the closed-loop system is the cascade of the system with state $(\varepsilon, \epsilon_2)$, whose origin is asymptotically stable, and a system (with state $(\xi_1, \xi_2)$) with vanishing input-bounded state properties. By Theorem~\ref{TH2} the problem is thus solved without necessarily forcing the trajectories of~\eqref{eq:pendulum_plant} to a predefined steady state. Simulations in Fig.~\ref{fig:epsilon_zero} confirm this. Fig.~\ref{fig:different_initial_conditions} shows that the actual steady-state reference trajectory reached by the output $\mu_1$ depends on the initial conditions of the controlled system. We further observe that the oscillation amplitude depends on the initial conditions of both the controlled system and the synchrony detector, and that there exists an initial condition of the reference model, shown in thick red (Fig.~\ref{fig:different_initial_conditions}), for which the resulting trajectories coincide with those of the undamped pendulum although with a different phase. As in the previous example, the output of the controlled system is not rigidly constrained by the initial state of either the detector or the reference model.
\begin{figure}
\begin{center}
\includegraphics[width=8.4cm, height=4.5cm]{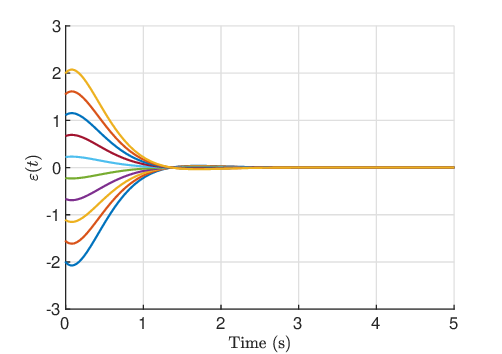}
\caption{Evolution of the synchronization error $\varepsilon(t)$ for different initial conditions of the controlled system. Results shown for $k_p = 10$ and $k_d = 5$.} 
\label{fig:epsilon_zero}
\end{center}
\end{figure}
\begin{figure}
\begin{center}
\includegraphics[width=8.4cm, height=4.5cm]{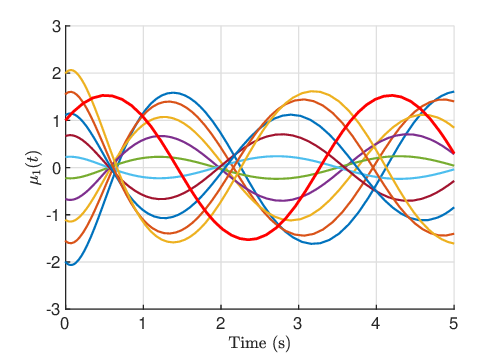}    
\caption{Evolution of the angle for different initial conditions, showing the ability to reproduce a family of behaviors. Results shown for $\omega_p = 10\,\mathrm{rad/s}$, $\omega_m =~2\,\mathrm{rad/s}$ and $v=0$. In red the trajectory for which the reference exhibits the same behavior of the controlled system, but with a noticeable phase shift.}
\label{fig:different_initial_conditions}
\end{center}
\end{figure}
\section{Concluding remarks}

The relationship between the constructions proposed in this paper and the various formulations of the internal model principle and model matching is subtle. In the classical nonlinear output regulation setting, as in \cite{ByrnesIsidori2003}, and in the asymptotic model matching framework of \cite{DiBenedettoGrizzle1994}, the analysis is carried out on an error system: controlled and reference dynamics are combined into an extended system, a zero-error manifold is identified, and necessary conditions for solvability are expressed in terms of invariance and stability properties of this manifold. On that manifold the closed-loop dynamics must embed the target dynamics, so that the controller necessarily contains a subsystem that generates the steady-state input keeping the error identically zero. In this sense, the internal model appears as a precise geometric object tied to the error and to its zero dynamics.

In the asymptotic behavior assignment formulation adopted here, we do not construct such an error system, nor do we embed a literal copy of the reference dynamics in the controller. Nevertheless, one should still expect some form of internal model to be present in the closed-loop system: along trajectories that achieve behavior assignment there must exist internal dynamics that reproduces the signals generated by the reference behavior. The crucial difference is that the specification we impose is weaker---we ask for asymptotic input--output behavior matching rather than full trajectory tracking of a designated reference solution. It is therefore reasonable to expect that any future necessary condition in this setting will also be weaker than those in \cite{DiBenedettoGrizzle1994}, reflecting the fact that input--output behavior enforcement is less demanding than asymptotic error convergence on an extended state space.

A further distinction concerns robustness. In the linear theory of \cite{FrancisWonham1976}, robust regulation with respect to perturbations of the controlled system's parameters requires that a physical regulation error is available and that the controller contains an internal model of the exosystem generating the exogenous signals. In our framework no such physical error is used: the synchrony detectors and the signals they produce are inherently model-based. It is therefore natural to regard the present results as nominal, and to anticipate that the flexibility gained by avoiding error feedback is paid for by a loss of robustness. Making this trade-off precise, and understanding to what extent internal-model-like structures can still be identified under suitable robustness assumptions in an asymptotic behavior assignment formulation, are topics for future work.

More broadly, the examples considered in this paper suggest that splitting the control objective into disturbance generation and behavior reproduction may lead to a refined internal model picture: one internal structure associated with the dynamics of exogenous signals and another associated with the desired input--output behavior to be enforced. Developing a geometric theory of such structures, in the spirit of \cite{DiBenedettoGrizzle1994} and \cite{ByrnesIsidori2003}, but tailored to open reference systems and to the absence of explicit error feedback, is part of an ongoing research programme.

\bibliography{ifacconf}             
                                            











\end{document}